\let\proof\relax   
\newtheorem{lemma}{Lemma}
\newtheorem{theorem}{Theorem}
\newcommand*{\transpose}{%
  {\mathpalette\@transpose{}}%
}
\begin{document}

\newcommand{\SB}[3]{
\sum_{#2 \in #1}\biggl|\overline{X}_{#2}\biggr| #3
\biggl|\bigcap_{#2 \notin #1}\overline{X}_{#2}\biggr|
}

\newcommand{\Mod}[1]{\ (\textup{mod}\ #1)}

\newcommand{\overbar}[1]{\mkern 0mu\overline{\mkern-0mu#1\mkern-8.5mu}\mkern 6mu}

\makeatletter
\newcommand*\nss[3]{%
  \begingroup
  \setbox0\hbox{$\m@th\scriptstyle\cramped{#2}$}%
  \setbox2\hbox{$\m@th\scriptstyle#3$}%
  \dimen@=\fontdimen8\textfont3
  \multiply\dimen@ by 4             % 4x the default rule thickness
  \advance \dimen@ by \ht0
  \advance \dimen@ by -\fontdimen17\textfont2
  \@tempdima=\fontdimen5\textfont2  % x-height
  \multiply\@tempdima by 4
  \divide  \@tempdima by 5          % 80% of the x-height
  % Modifications are only necessary if the top of the subscript is not that high:
  \ifdim\dimen@<\@tempdima
    \ht0=0pt                        % don't let the subscript interfere
    \@tempdima=\fontdimen5\textfont2
    \divide\@tempdima by 4          % 25% of the x-height
    \advance \dimen@ by -\@tempdima % if >0, add to depth of superscript!
    \ifdim\dimen@>0pt
      \@tempdima=\dp2
      \advance\@tempdima by \dimen@
      \dp2=\@tempdima
    \fi
  \fi
  #1_{\box0}^{\box2}%
  \endgroup
  }
\makeatother

\makeatletter
\renewenvironment{proof}[1][\proofname]{\par
  \pushQED{\qed}%
  \normalfont \topsep6\p@\@plus6\p@\relax
  \trivlist
  \item[\hskip\labelsep
        \itshape
%    #1\@addpunct{.}]\ignorespaces% DELETED
    #1\@addpunct{:}]\ignorespaces% ADDED
}{%
  \popQED\endtrivlist\@endpefalse
}
\makeatother

\makeatletter
\newsavebox\myboxA
\newsavebox\myboxB
\newlength\mylenA

\newcommand*\xoverline[2][0.75]{%
    \sbox{\myboxA}{$\m@th#2$}%
    \setbox\myboxB\null% Phantom box
    \ht\myboxB=\ht\myboxA%
    \dp\myboxB=\dp\myboxA%
    \wd\myboxB=#1\wd\myboxA% Scale phantom
    \sbox\myboxB{$\m@th\overline{\copy\myboxB}$}%  Overlined phantom
    \setlength\mylenA{\the\wd\myboxA}%   calc width diff
    \addtolength\mylenA{-\the\wd\myboxB}%
    \ifdim\wd\myboxB<\wd\myboxA%
       \rlap{\hskip 0.5\mylenA\usebox\myboxB}{\usebox\myboxA}%
    \else
        \hskip -0.5\mylenA\rlap{\usebox\myboxA}{\hskip 0.5\mylenA\usebox\myboxB}%
    \fi}
\makeatother

\xpatchcmd{\proof}{\hskip\labelsep}{\hskip3.75\labelsep}{}{}

\pagestyle{plain}
%\pagenumbering{gobble}

%\title{\fontsize{22.59}{28}\selectfont  Photon Utilization Using Adaptive Modulation on Quantum Links}
\title{Increasing the Raw Key Rate in Energy-Time\\[-0.5ex] Entanglement Based Quantum Key Distribution}

\author{Esmaeil Karimi, Emina Soljanin, and Philip Whiting\thanks{E. Karimi is with the Department of Electrical and Computer Engineering, Texas A\&M University,
College Station, TX 77843 USA
(E-mail: esmaeil.karimi@tamu.edu).}
\thanks{E. Soljanin is with the Department of Electrical and Computer Engineering, Rutgers University, Piscataway, NJ 08854 USA
(E-mail: emina.soljanin@rutgers.edu).}
\thanks{P. Whiting is with the Department of Engineering, Macquarie University, North Ryde, NSW 2109 Australia (E-mail: philip.whiting@mq.edu.au).
}}%\thanks{This material is based upon work supported by the National Science Foundation under Grants No. 1718658, 1642983, and 1547447.}}

%\thanks{This work was supported by the National Science Foundation under Grant No.~CNS-0954153 and the AFOSR under Contract No.~FA9550-13-1-0008.}

\maketitle 

\thispagestyle{plain}

\begin{abstract}
A Quantum Key Distribution (QKD) protocol describes how two remote parties can establish a secret key by communicating over a quantum and a public classical channel that both can be accessed by an eavesdropper. QKD protocols using energy-time entangled photon pairs are of growing practical interest because of their potential to provide a higher secure key rate over long distances by carrying multiple bits per entangled photon pair.
We consider a system where information can be extracted by measuring random times of a sequence of entangled photon arrivals. Our goal is to maximize the utility of each such pair.
We propose a discrete time model for the photon arrival process, and establish a theoretical bound on the number of raw bits that can be generated under this model. We first analyse a well known simple binning encoding scheme, and show that it generates significantly lower information rate than what is theoretically possible. We then propose three adaptive schemes that increase the number of raw bits generated per photon, and compute and compare the information rates they offer. Moreover, the effect of public channel communication on the secret key rates of the proposed schemes is investigated.
\end{abstract}

\section{introduction}
A Quantum Key Distribution (QKD) protocol describes how two parties, commonly referred to as Alice and Bob, can establish a secret key by communicating over a quantum and a public classical channel that both can be accessed by an eavesdropper Eve. For the widespread adoption of QKD, it is mandatory to provide high key rates over long distances (see a related survey\cite{Diamanti_2016}). What has appeared as a bottleneck in practice is the inability to maximize the utility of information-bearing quantum states that are communicated over the quantum channel\cite{article,Wehnereaam9288,PhysRL.Khan.2007}. 
QKD based on energy-time entangled photons has emerged as a promising technique primarily because each entangled photon pair can carry multiple raw key bits, and thus potentially provide a higher secure key rate over long distances \cite{lee2016highrate,Sarihan:19}. Moreover, it has been shown that higher dimensional quantum states are more sensitive to eavesdropping and are also more robust to certain types of
noise\cite{PhysRevLett.2002,PhysRevA.2005,PhysRevA.2006,PhysRevA.2010}.

Timing information extraction in energy-time entanglement based QKD schemes is commonly achieved through a method known as time-bin encoding \cite{Brougham_2013,Zhong2015PhotonefficientQK}. The time-bin encoding method is essentially a Pulse-Position Modulation (PPM) scheme, which is a common technique that converts the binary time-pulse sequences into large-alphabet sequences of fixed alphabet size. Alice and Bob timestamp their photon arrivals, and then map the timestamps to bit strings. Under ideal conditions, Alice and Bob are supposed to receive identical sequences.
The bit strings obtained in this case constitute the raw key. The objective of this paper is to maximize the length of the raw key.

Due to errors such as timing jitter, transmission loss and low detection efficiency, there are disparities between the received sequences in practical implementations\cite{Wornell.2013.Layered,10.5555/1972505,yang2020efficient}. In order to systematically increase the correlation between their key strings, while reducing Eve’s acquired information, Alice and Bob perform information reconciliation followed by privacy amplification, which reduces the key length\cite{Wornell.2013.Layered,yang2020efficient,10.5555/1972505,mao2019high}.
Note that achieving long raw keys does not necessarily imply long secret keys. A modulation scheme with a higher raw key might be more susceptible to noise and eavesdropping, and thus result in a relatively short secret key. Such considerations are beyond the scope of the current paper as we here are concerned only with the raw key rate.

A simple PPM scheme was proposed in \cite{6181805}. Although the simple PPM scheme eliminates the effect of photon transmission losses, it is not efficient for preserving useful information. In \cite{Wornell.2013.AdaPPM}, a generalized version of the simple PPM scheme, called adaptive PPM, was proposed which utilizes a good portion of the information discarded by the simple PPM scheme.  

In this work, our goal is to show that carefully modeled modulations can offer substantial raw key rate improvements, and also to pave the way for further exploration of high rate, low latency quantum-secure networks. We propose a new photon arrival model, a discrete time model for the photon arrival process with geometric distribution replacing the Poisson, and establish a theoretical bound on the number of secret bits that can be generated under this model (see Sec.~\ref{sec:SM}). Inspired by \cite{6181805,Wornell.2013.AdaPPM}, we first propose a simple binning scheme and show that this scheme generates significantly lower information rate than what is theoretically possible. We then propose three adaptive schemes that increase the number of raw bits generated per photon, and compute and compare the information rates they offer. Unlike the schemes in \cite{6181805,Wornell.2013.AdaPPM}, we not only use the single occupied bins but also utilize the single empty bins to generate secret bits (see Sec.~\ref{sec:PS}). Furthermore, we investigate the effect of public channel communication on the secret key rates of the proposed schemes (see Sec.~\ref{sec:PC}). 

%The paper is organized as follows: In Sec.~\ref{sec:SM}, we present the system model and point out the theoretical bound on the number of secret bits that can be generated under this model. In Sec.~\ref{sec:PS}, we propose four modulation schemes and derive their raw key rates. In Sec.~\ref{sec:PC}, we investigate the effect of public channel communication on the raw key rate. In Sec.~\ref{sec:CR}, we analyze and compare the performance of the proposed schemes.
\begin{figure*}[t]
    \centering
    \includegraphics[width=0.6\textwidth]{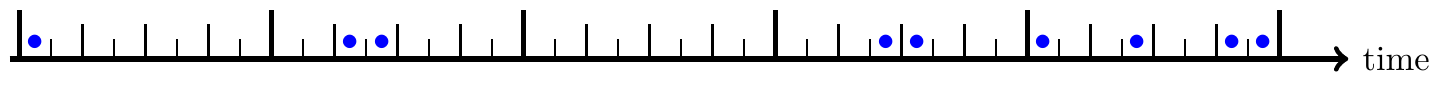}
    \caption{Five frames consisting of $8$ time units with bins consisting of $2$ time units.}
    \label{fig:PhotonLanding}
\end{figure*}

 \section{System Model}\label{sec:SM}
 Throughout the paper the base of $\log$ is $2$, unless explicitly noted otherwise. Consider a scenario wherein two parties, referred to as Alice and Bob, desire to generate a secret key using a quantum and a public channel. There is a third party, Eve, who has access to both channels.  A source (possibly co-located with Alice) emits entangled photon pairs to Alice and Bob, with one photon being sent to Alice, and the other to Bob. We consider a system where information can be extracted by measuring random times of a sequence of photon arrivals. We assume that time is measured in units such that at most one photon can arrive in a single time unit (See Fig.~\ref{fig:PhotonLanding}). The length of a time unit equals the minimum time that a photon detector needs to successfully detect a single photon. We assume that a photon arrives in each time unit with probability $p$ independently of other arrivals. The value of $p$ depends on the number of photons generated per second by the source. A similar model was adopted in \cite{Wornell.2013.AdaPPM,PhysRL.Khan.2007}. Photons are not fully utilized unless the arrival time of each received photon can be used to contribute information. 
\begin{theorem}\label{thrm:OptRate}
 The maximum number of bits per time unit that can be extracted from the timestamps of photon arrival times equals the binary entropy with parameter $p$. 
 \begin{equation}\label{eq:MaxRate}
   h(p)=-p\log p - (1-p)\log (1-p)   
 \end{equation}
\end{theorem}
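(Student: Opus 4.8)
The plan is to recognize the photon arrival process as a discrete memoryless (Bernoulli) source and to invoke Shannon's source coding theorem. First I would fix an observation window of $n$ consecutive time units and encode the arrival pattern as a binary string $X^n=(X_1,\dots,X_n)\in\{0,1\}^n$, where $X_i=1$ indicates a photon arrival in unit $i$. By the modeling assumption the $X_i$ are i.i.d.\ with $\Pr[X_i=1]=p$, and the key observation is that the list of arrival timestamps determines $X^n$ exactly, and conversely $X^n$ determines the timestamps. Hence any scheme that extracts bits from the timestamps is, equivalently, a scheme that extracts bits from $X^n$, and no information is lost or gained by passing between the two representations.

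Next I would compute the information content of one observation window. Because the $X_i$ are independent, the joint entropy factorizes, so that $H(X^n)=\sum_{i=1}^{n}H(X_i)=n\,h(p)$, with $h(p)$ the binary entropy function of \eqref{eq:MaxRate}. This reduces the whole problem to the single-symbol entropy: the per-time-unit information content of the source is exactly $h(p)$, independent of $n$.

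For the converse direction (that no scheme can exceed $h(p)$ bits per time unit), I would appeal to the lossless source coding theorem: any uniquely decodable description of $X^n$ has expected length at least $H(X^n)=n\,h(p)$ bits, so the extraction rate per time unit cannot exceed $h(p)$. For achievability, the same theorem supplies codes---for instance block Huffman or arithmetic coding, or an argument via the asymptotic equipartition property---whose expected length per symbol approaches $h(p)$ as $n\to\infty$. Combining the two bounds shows that the supremum of achievable extraction rates equals $h(p)$, as claimed.

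The step I expect to require the most care is making precise the operational meaning of ``the maximum number of bits that can be extracted.'' I would state explicitly that this refers to the expected number of bits in an (asymptotically) lossless description of the arrival pattern, which is exactly the quantity that the source coding theorem characterizes; the bijection between timestamps and the indicator string $X^n$ is what licenses applying that theorem here. Once this identification is in place, the i.i.d.\ structure collapses the argument to the binary entropy of a single time unit, and no further technical obstacle remains.
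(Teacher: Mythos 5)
Your proof is correct, but it takes a different route from the paper's. You decompose the information \emph{by time unit}: the timestamps are in bijection with the Bernoulli indicator string $X^n$, so the extractable information is $H(X^n)=\sum_i H(X_i)=n\,h(p)$, with the source coding theorem supplying both the converse and achievability. The paper instead decomposes \emph{by photon}: it observes that the inter-arrival times are geometric with parameter $p$, so each observed photon carries $h(p)/p$ bits (the entropy of a geometric random variable), and since $n$ time units contain $np$ photons on average, the total is $n\,h(p)$. The two decompositions are two ways of slicing the same entropy, and indeed the paper's remark immediately after the theorem states exactly the equivalence you take as your starting point. Your version buys rigor: the i.i.d.\ structure makes the entropy computation immediate, and invoking the lossless source coding theorem cleanly separates the upper bound (no uniquely decodable description beats $H(X^n)$) from achievability, whereas the paper's argument silently multiplies an average per-photon entropy by an average photon count (which strictly needs a Wald-type or renewal argument to justify) and does not explicitly argue either direction of optimality. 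The paper's version buys interpretability: the quantity $h(p)/p$ bits per photon is the natural benchmark for the ``photon utilization'' metric used throughout the rest of the paper. Your closing caveat about pinning down the operational meaning of ``bits extracted'' is well taken --- in the QKD setting the relevant operational quantity is the number of common uniform bits Alice and Bob can distill from the shared observation, which is again $H(X^n)$ asymptotically, so the identification you make is the right one.
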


 All the proofs can be found in the Appendix. Observe that this result implies that (under the assumed model) the photon timing information gives us as much information as would the binary sequence indicating the photon arrival times.

\section{Proposed Schemes \label{sec:PS}}
Considering the ideal case wherein all the incoming photons are transmitted and detected successfully,  Alice and Bob receive their shares of the entangled pairs at random but identical time units.
Alice and Bob timestamp their photon arrivals, and map these timestamps to bit strings, which they subsequently process to generate their common key. In this section, we ignore the effect of communication over the public channel on the raw key rate of a scheme. 
\subsection{Simple Binning}
In simple binning , time is partitioned in frames consisting of $n$ time units. We take $n$ to be a power of two. Fig.~\ref{fig:PhotonLanding} shows an example where $n=8$. Each frame is divided into $n/k$ bins, each consisting of $k\leq n$ time units, and we are free to choose $k$. Note that $k$ also needs to be a power of two in order for $n$ to be divisible by $k$. Bins are labeled by $\log (n/k)$ bit strings. A bin is called occupied if there is at least one photon present in the bin. Alice and Bob are able to generate a common random sequence based on the position of a single occupied bin or a single empty bin in the frames.

Information is extracted from a sequence of frames as follows: All frames are discarded except those containing either a single occupied bin or a single empty bin. Each frame with a single occupied bin contributes $\log (n/k)$ key bits identifying the single occupied bin since all positions of the occupied bin are equally likely. Similarly, each frame with a single empty bin contributes $\log (n/k)$ key bits. When there is one occupied bin and one empty bin, Alice and Bob consider the bit string label of the occupied bin as their common random sequence.  
Note that communication over the public channel is not needed here.

In the example of Fig.~\ref{fig:PhotonLanding}, if the bin size is chosen to be $2$, then the first 2 frames would contribute $2$ bits each since there is only one occupied bin among the four bins in each frame. The third and the fourth frames would be discarded since one is empty and the other one consists of two occupied and two empty bins. The fifth frame also contributes $2$ bits of information since it contains only one empty bin among its four bins. If, on the other hand,  the bin size is chosen to be $1$, then all but the first frame would be discarded, and we would be left with $3$ bits of information.

 The probability that a bin consisting of $k$ time units is occupied is given by ${\pi_k\triangleq 1-(1-p)^k}$. Let the probability that a bin consisting of $k$ time units is empty be given by ${\Bar{\pi}_k\triangleq (1-p)^k}$. We define the 
{\it raw key rate of a scheme} as the expected number of raw key bits per time unit.
 \begin{theorem}\label{theo:STB}
Let $n$ be the number of time units in a frame, and let each frame be divided into $n/k$ bins, each consisting of $k\leq n$ time units. The raw key rate of the simple binning scheme is given by
\begin{equation}\label{eq:SimpleRate}
    r_{SB}=
    \begin{cases}
0, &  k=n,\\
\frac{1}{k}\pi_k\Bar{\pi}_k , &  k=\frac{n}{2},\\
\frac{1}{k}\Big [ \pi_k\Bar{\pi}_k^{\frac{n}{k}-1}+\pi_k^{\frac{n}{k}-1}\Bar{\pi}_k\Big ]\log \frac{n}{k}, & \text{otherwise}.
\end{cases}
\end{equation}
\end{theorem}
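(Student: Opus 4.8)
The plan is to compute the expected number of raw key bits produced by a single frame and then normalize by the $n$ time units it contains. Write $m = n/k$ for the number of bins per frame; since $n$ and $k$ are powers of two, $m \in \{1,2,4,\ldots\}$. First I would observe that the bins partition the frame into disjoint blocks of $k$ consecutive time units, and because photon arrivals in distinct time units are independent Bernoulli$(p)$ trials, the occupancy indicators of the $m$ bins are independent, each bin being occupied with probability $\pi_k = 1-(1-p)^k$ and empty with probability $\bar\pi_k = (1-p)^k$. Thus the number of occupied bins in a frame is Binomial$(m,\pi_k)$, and the whole analysis reduces to a counting problem on $m$ independent coin flips; since frames are i.i.d., the per-time-unit rate equals the expected bits per frame divided by $n$.

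Next I would isolate the two events that make a frame useful: $E_{\text{occ}}$, that exactly one bin is occupied (the other $m-1$ empty), and $E_{\text{emp}}$, that exactly one bin is empty (the other $m-1$ occupied). A frame in $E_{\text{occ}}\cup E_{\text{emp}}$ reveals the label of the distinguished bin and contributes $\log m$ equally likely bits, while all other frames are discarded. By the binomial formula $\Pr[E_{\text{occ}}] = m\,\pi_k\,\bar\pi_k^{\,m-1}$ and $\Pr[E_{\text{emp}}] = m\,\pi_k^{\,m-1}\,\bar\pi_k$, so the expected bits per frame equals $\Pr[E_{\text{occ}}\cup E_{\text{emp}}]\cdot\log m$, provided the overlap of the two events is handled correctly.

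The crux — and the reason the statement splits into three cases — is precisely this overlap. When $m=1$ there is no label information, so $\log m = 0$ and the rate vanishes. When $m\ge 3$ (hence $m\ge 4$ here, as $m$ is a power of two) the events $E_{\text{occ}}$ and $E_{\text{emp}}$ are disjoint, since ``exactly one occupied'' and ``exactly one empty'' can coincide only if $m-1=1$; adding the two probabilities and multiplying by $\log m$ yields the ``otherwise'' expression. The delicate case is $m=2$: there ``exactly one occupied'' and ``exactly one empty'' describe the \emph{same} configuration (one bin of each type), so $E_{\text{occ}} = E_{\text{emp}}$ and the scheme extracts a single label worth $\log 2 = 1$ bit from an event of probability $2\pi_k\bar\pi_k$; naively summing $\Pr[E_{\text{occ}}]+\Pr[E_{\text{emp}}]$ would double count and must be avoided. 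I expect this $m=2$ overlap to be the only real subtlety.

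Finally I would convert expected bits per frame into rate per time unit by dividing by $n$ and using $m/n = 1/k$ to absorb the binomial multiplicity $m$. The case $m=2$ gives $\tfrac1k\,\pi_k\bar\pi_k$, and the case $m\ge 4$ gives $\tfrac1k\big[\pi_k\bar\pi_k^{\frac nk-1} + \pi_k^{\frac nk-1}\bar\pi_k\big]\log\frac nk$, matching \eqref{eq:SimpleRate}; the $m=1$ case is immediate. Everything beyond the overlap bookkeeping is a direct Bernoulli/binomial computation.
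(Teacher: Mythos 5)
Your proposal is correct and follows essentially the same route as the paper's proof: compute the probabilities of the ``exactly one occupied bin'' and ``exactly one empty bin'' events from the independent Bernoulli$(\pi_k)$ bin occupancies, observe that the two events coincide when $n/k=2$ (so the union must not be double counted) and are disjoint when $n/k\geq 4$, and normalize by $n$. The only difference is cosmetic: you make the independence and binomial structure explicit, whereas the paper states the same probabilities directly.
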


We define {\it the photon utilization of a scheme} as the ratio between its raw key rate and the rate of the ideal scheme given by~\eqref{eq:MaxRate}.
Fig.~\ref{fig:SimpleRate} depicts the performance of the simple binning scheme. 
Two crucial parameters in simple binning encoding are the bin width and the frame size, which have to be carefully selected in order to maximize the photon utilization. The choice of these parameters also affects certain type of errors. It is therefore essential to understand the limitations that the system and physics impose on these parameters. Under no constraints, smaller bins and larger frames would maximize the photon utilization. However, physical constraints on energy-time entangled photons prevent the bin widths from becoming infinitely small. The minimum bin width is limited to the length of a time unit and the maximum frame size is limited by the coherence time of the entangled photon pair, which is determined by the spontaneous parametric down-conversion bandwidth \cite{PhysRL.Khan.2007,Zhong2015PhotonefficientQK}. Observe that, under restrictive conditions, the highest photon utilization achievable by the simple binning scheme will be limited, e.g., it is about $0.5$ for the frames of $n=64$ time units. This low efficiency is due to discarding a large fraction of frames. We next propose three more efficient schemes which use all or at least a large fraction of the frames. 

\subsection{Adaptive Binning}
The idea here is to not fix the size of the bins in advance, but instead adapt it to the photons observations for each frame. The size of the bins in a given frame is chosen by Alice and Bob deterministically based on the number and the locations of the photons observed in the frame as follows. Each bin is constructed using a collection of $k$ consecutive time units. The bin construction starts from the first time unit and ends at the last time unit in the frame. %such that the first and the last time units cannot be in the same bin unless there is only one bin in the frame.
Alice and Bob choose the minimum $k$ that satisfies the following conditions: 1) the bins in a frame form a partition for the set of time units in the frame, and 2) either only one bin is occupied by photons among all the bins, or only one bin is empty among all the bins. We refer to these two conditions as the binning conditions. The rest follows the same steps as in the simple binning scheme. 

In this scheme, communication over the public channel is not required because the bin construction is done deterministically. In the example of Fig.~\ref{fig:PhotonLanding}, for the first frame, the minimum bin size that satisfies the binning conditions is $1$. Hence, the first frame contributes $3$ bits of information. The proper bin size for the second frame is $2$, and it contributes $2$ bits of information. The third frame is discarded. Let the time units in the fourth frame be labeled $1,\dots,8$. If we consider $k=2$, the bins will be $\{1,2\}$, $\{3,4\}$, $\{5,6\}$, and $\{7,8\}$. It is easy to see that the second and the third bins are occupied and the first and the fourth bins are empty. Thus, $k=2$ does not satisfy the binning conditions. If we let the bin size be $k=4$, we will be left with two occupied bins, and thus $k=4$ also does not satisfy the binning conditions. Hence, the minimum bin size for the fourth frame that satisfies the binning conditions is $k=8$. That is, the fourth frame consists of only one occupied bin. Thus, using this scheme, no information can be extracted from the fourth frame. The minimum bin size for the fifth frame that satisfies the binning conditions is $2$. There would be only one empty bin (third bin) among all four bins. Thus, the fifth frame also contributes $2$ bits of information.
\begin{theorem}\label{theo:AB}
%Let $n$ and $\ell$ denote the number of time units and the number of photons in a frame, respectively. 
Let $n$ be the number of time units and $\ell$ the number of photons in a frame. The raw key rate of the adaptive binning scheme is given by
\begin{equation}\label{eq:abrate}
r_{AB}=\sum_{\ell=1}^{n/2}\sum_{i={\lceil \log \ell \rceil}}^{\log (n/2)} \frac{1}{2^i}\binom{2^i}{\ell}p^{\ell}(1-p)^{n-\ell}+
\sum_{i=0}^{\log(n/4)}\frac{1}{2^i}\pi_{2^i}^{\frac{n}{2^i}-1}\Bar{\pi}_{2^i}(\log n-i).
%\log(\frac{n}{2^i}).
\end{equation}
\end{theorem}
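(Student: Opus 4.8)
The plan is to compute the raw key rate as $r_{AB}=\frac{1}{n}\,\mathbb{E}[B]$, where $B$ is the number of bits a single frame contributes, namely $B=\log(n/k^\star)$ when the frame qualifies at minimum admissible bin size $k^\star$, and $B=0$ otherwise. Because $n$ and every admissible $k$ are powers of two, the candidate binnings form a nested (dyadic) family indexed by $i$ with $k=2^i$: each level-$(i-1)$ bin is a half of a level-$i$ bin. I will exploit this nesting throughout. The first step is to split the qualifying frames into two classes according to which binning condition is met first as $i$ increases from $0$: those whose minimum qualifying level is of the \emph{single-occupied} type, and those of the \emph{single-empty} type. The two sums in \eqref{eq:abrate} account for these two classes, so the crux is to show the split is both exhaustive and non-overlapping.

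For the single-occupied class I will condition on the number $\ell\ge 1$ of photons in the frame and introduce the nested events $A_i=\{\text{all }\ell\text{ photons lie in one level-}i\text{ bin}\}$, so that $A_{i-1}\subseteq A_i$ and the minimum single-occupied level equals $i$ exactly on $A_i\setminus A_{i-1}$. A direct count gives $\Pr(A_i,\ \ell\text{ photons})=\frac{n}{2^i}\binom{2^i}{\ell}p^{\ell}(1-p)^{n-\ell}$ (choose the occupied bin among the $n/2^i$ bins, place the $\ell$ photons in its $2^i$ slots, leave the remaining $n-\ell$ slots empty). The contribution of this class is $\frac{1}{n}\sum_{\ell}\sum_i(\log n-i)\Pr(A_i\setminus A_{i-1})$; applying summation by parts with $b_i=\log n-i$ (using $b_{\log n}=0$ and $\Pr(A_{\lceil\log\ell\rceil-1})=0$) telescopes the bit factor and collapses the weighted difference into $\frac{1}{n}\sum_{\ell}\sum_i\Pr(A_i)$, which is precisely the first sum of \eqref{eq:abrate}. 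Here $\ell$ need only run to $n/2$ and $i$ to $\log(n/2)$, since all photons fit in a single bin of size $\le n/2$ only when $\ell\le n/2$, and a minimum level of $\log n$ yields $0$ bits.

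For the single-empty class, at level $i$ the condition means $n/2^i-1$ occupied bins and one empty bin; since the bins occupy disjoint time units their occupancies are independent, giving probability $\frac{n}{2^i}\,\pi_{2^i}^{\,n/2^i-1}\bar\pi_{2^i}$ (the factor $n/2^i$ choosing the empty bin). I will show this event is automatically at the minimum qualifying level when $i\le\log(n/4)$: the empty bin's two children are both empty, so every smaller level has $\ge 2$ empty bins (ruling out single-empty), while the $\ge 3$ occupied bins persist under splitting (ruling out single-occupied). Multiplying by the $\log n-i$ bits and dividing by $n$ turns the $n/2^i$ into $1/2^i$ and yields the second sum. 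Adding the two contributions gives \eqref{eq:abrate}.

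The main obstacle is the bookkeeping that makes the two-way split exact rather than merely heuristic. I must confirm that no frame is counted twice and none is missed: that a single-occupied frame of useful level $\le\log(n/2)$, whose photons all lie in one half, cannot be preempted by a single-empty level, whose defining configuration (one empty bin among $\ge 4$) necessarily spreads occupancy across both halves; that the lone overlap case $n/k=2$, where single-occupied and single-empty coincide, is charged to the first sum only; and that the single-empty events for distinct $i$ are mutually exclusive, since enlarging an empty bin merges it with an occupied neighbor. Once these minimality and disjointness facts are in place, the summation-by-parts identity does the rest and the two sums assemble into \eqref{eq:abrate}.
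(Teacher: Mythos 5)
Your proposal is correct and follows essentially the same route as the paper: the same split into single-occupied and single-empty classes, the same nested-event probability $\tfrac{n}{2^i}\binom{2^i}{\ell}p^{\ell}(1-p)^{n-\ell}$ for the cumulative single-occupied event, the same summation-by-parts/telescoping that turns the weighted differences into $\sum_i \Pr(A_i)$, and the same direct expression $\tfrac{n}{2^i}\pi_{2^i}^{n/2^i-1}\bar{\pi}_{2^i}$ for the single-empty levels $i\le\log(n/4)$. Your added verification that the two classes are exhaustive and disjoint (and that the $n/k=2$ overlap is charged once) is more explicit than the paper's treatment but does not change the argument.
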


\subsection{Adaptive Aggregated Binning}
In this scheme, the size of the bins in individual frames depends only on the number of photons observed in the frame. When a frame is occupied with $\ell\leq n/2$ photons, Alice partitions the set of time units in the frame into ${m={n}/{2^{\lceil \log \ell \rceil}}}$ bins of size $k=2^{\lceil \log \ell \rceil}$, denoted by $B_1,B_2,\dots,B_m$. Then, Alice chooses a bin randomly, say $B_i$, and  assigns all the $\ell$ time units carrying a photon to this bin. Also, from the remaining time units, $k-\ell$ randomly chosen time units will be assigned to $B_i$. After this step, from the remaining time units, $k$ randomly picked time units will be assigned to each bin $B_j$ for $j\in\{1,2,\cdots,m\}\setminus i$. Note that there exists only one occupied bin and the position of this bin is uniformly distributed.

Otherwise, when $\ell > n/2$ photons have been observed in a frame, Alice partitions the set of the time units in the frame into ${m={n}/{2^{\lfloor \log (n-\ell) \rfloor}}}$ bins of size $k=2^{\lfloor \log (n-\ell) \rfloor}$, denoted by $B_1,B_2,\dots,B_m$. Then, Alice chooses a bin randomly, say $B_i$, and  assigns $k$ randomly picked empty time units to this bin. From the remaining time units, Alice assigns $k$ randomly chosen to each bin $B_j$ for $j\in\{1,2,\cdots,m\}\setminus i$. Note that there exists only one empty bin and the position of this bin is uniformly distributed. After forming the bins, Alice sends the binning information to Bob over the public channel.

In the example of Fig.~\ref{fig:PhotonLanding}, the first frame contributes $3$ bits of information. The second and the fourth frames contribute $2$ bits of information each since Alice is able to form $4$ bins of size $2$ where only one of the bins is occupied. The third frame would be discarded. The fifth frame contributes $1$ bit of information since the time units in the frame can be partitioned into $2$ bins of size $4$ while only one of the bins is occupied.

 \begin{theorem}\label{theo:AAB}
Let $n$ be the number of time units and $\ell$ the number of photons in a frame. The raw key rate of the adaptive aggregated binning scheme is given by 
\begin{equation}
r_{AAB}=\frac{1}{n}\Biggr [\sum_{\ell=1}^{n/2}\binom{n}{\ell}p^{\ell}(1-p)^{n-\ell}\Big(\log n - {\lceil \log \ell \rceil}\Big)+
\sum_{\ell=\frac{n}{2}+1}^{n-1}\binom{n}{\ell}p^{\ell}(1-p)^{n-\ell}\Big(\log n - {\lfloor \log (n-\ell) \rfloor}\Big)\Biggr ]. 
\label{eq:aabRate}   
\end{equation}
\end{theorem}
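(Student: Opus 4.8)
The plan is to reduce the rate computation to a single expectation over one frame. Since a photon arrives in each time unit independently and frames are processed independently of one another, the long-run average number of raw key bits per time unit equals $\tfrac{1}{n}\,\mathbb{E}[N]$, where $N$ is the number of bits the scheme extracts from a single frame of $n$ time units. Thus the entire task is to evaluate $\mathbb{E}[N]$ and divide by $n$.

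First I would record the law of the number of photons in a frame. Because each of the $n$ time units independently carries a photon with probability $p$, the photon count $\ell$ is Binomial$(n,p)$, i.e.\ $\Pr(\ell=j)=\binom{n}{j}p^{j}(1-p)^{n-j}$. I would then condition on $\ell$ and compute the number of bits produced in each occupancy regime, so that $\mathbb{E}[N]=\sum_{j}\Pr(\ell=j)\,\mathbb{E}[N\mid \ell=j]$.

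Next I would handle the two regimes. For $1\le \ell\le n/2$, the bin size $k=2^{\lceil \log \ell\rceil}$ is a power of two with $\ell\le k\le n/2$, hence $k$ divides $n$; Alice places all $\ell$ photons together with $k-\ell$ empty units in the chosen bin $B_i$ and fills the remaining $m-1$ bins with the leftover $n-k=(m-1)k$ empty units, leaving exactly one occupied bin among $m=n/2^{\lceil \log \ell\rceil}$ bins. Since $B_i$ is chosen uniformly at random and $m$ is a power of two, the occupied bin's position is uniform over $m$ values and the frame yields exactly $\log m=\log n-\lceil \log \ell\rceil$ bits. Symmetrically, for $n/2<\ell\le n-1$ I would set $s=n-\ell$ and $k=2^{\lfloor \log s\rfloor}\le s$, fill one bin $B_i$ with $k$ empty units, and distribute the remaining units so that exactly one empty bin remains among $m=n/2^{\lfloor \log(n-\ell)\rfloor}$ bins, giving $\log n-\lfloor \log(n-\ell)\rfloor$ bits. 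The boundary values $\ell=0$ and $\ell=n$ contribute nothing, as then every bin is empty (respectively occupied) and no single distinguished bin exists; these are precisely the terms absent from the two sums. Summing the per-$\ell$ contributions and dividing by $n$ gives~\eqref{eq:aabRate}.

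The main obstacle is the feasibility and uniformity claim in the high-occupancy regime $n/2<\ell\le n-1$: one must verify that Alice can \emph{always} arrange exactly one empty bin with the prescribed $k$, i.e.\ that after emptying $B_i$ the remaining $s-k$ empty units can be spread so that none of the other $m-1$ bins becomes fully empty, equivalently that each receives at least one of the $\ell$ photons. Both the pigeonhole condition $\ell\ge m-1$ and the packing condition $s-k\le (m-1)(k-1)$ reduce to the single inequality $n/k\le n-s+1$, which I would verify using $s/2<k\le s$ together with $k\le n/4$ (the latter because $s<n/2$ forces $\lfloor \log s\rfloor\le \log n-2$). Once feasibility and the uniform distribution of the distinguished bin's position are secured, the per-frame bit counts are exactly $\log m$ in each case, and the remainder is a routine evaluation of a Binomial expectation.
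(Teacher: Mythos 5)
Your proof is correct and follows essentially the same route as the paper's: condition on the Binomial$(n,p)$ photon count and observe that a frame with $\ell$ photons yields $\log n - \lceil \log \ell \rceil$ (resp.\ $\log n - \lfloor \log (n-\ell) \rfloor$) bits because the single distinguished bin is uniformly positioned among the $m$ bins. Your explicit feasibility check in the high-occupancy regime (that the $s-k$ leftover empty units can always be spread so that no second empty bin is created, which reduces to $n/k \le n-s+1$ and does hold) is a detail the paper dismisses with ``a similar argument,'' so it is a welcome addition rather than a departure.
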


\subsection{Adaptive Framing} 
Unlike the other schemes, in this scheme, the bin size do not vary from frame to frame and for all the frames is $k=1$. Having observed $\ell\leq n/2$ photons in a frame, the set of time units in the frame will be partitioned into $\ell$ subframes by Alice. It should be noted that a subframe does not consist of adjacent time units necessarily. Let $F_1,F_2,\dots,F_{\ell}$ denote these subframes, and let $i_1,i_2,\dots,i_{\ell}$ be the indices of the time units carrying a photon. At the beginning, Alice assigns the time unit $i_j$ to the subframe $F_j$ for $j\in \{1,2,\cdots,\ell\}$. Then, starting from the first subframe, each subframe randomly picks an unassigned time unit. The previous step will be done repeatedly until all the time units have been assigned. In each subframe, there is exactly one bin occupied with a photon and its position is uniformly distributed. This procedure results in $r$ subframes of size $m+1$ and $\ell-r$ subframes of size $m$, where ${n=m\ell+r}$ and $0\leq r <\ell$. Hence, each frame occupied with $\ell\leq n/2$ photons contributes ${\rho =r\log (m+1)+(\ell-r)\log m}$ bits of information. The following lemma shows that this is the maximum information that can be extracted from a frame of size $n$ containing $\ell \leq n/2$ photons using the adaptive framing scheme.
\begin{lemma}\label{lem:MaxRate1}
Let $n$ be the size of a frame consisting of ${ \ell \leq n/2}$ photons. Alice constructs $\ell$ sets and assigns one each of the occupied time units to the respective sets. The remaining time units are assigned at random to the sets. Let $d_i\geq 1$ denote the number of elements in set $i$. The total information that can be extracted from the frame is therefore $I= \sum_{i=1}^{\ell}\log d_i$. It holds that \[I= \sum_{i=1}^{\ell}\log d_i \leq r\log (m+1)+(\ell-r)\log m,\]
where $n=m\ell+r$ and $0\leq r <\ell$.
\end{lemma}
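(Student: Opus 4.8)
The plan is to read Lemma~\ref{lem:MaxRate1} as a purely combinatorial optimization statement: among all ways of writing $n$ as an ordered sum $d_1+\cdots+d_\ell$ with each $d_i\ge 1$ an integer, the quantity $\sum_{i=1}^{\ell}\log d_i$ is maximized by the \emph{most balanced} partition, namely $r$ parts equal to $m+1$ and $\ell-r$ parts equal to $m$, where $n=m\ell+r$ with $0\le r<\ell$. Since the assignment procedure described just before the lemma produces exactly this balanced partition, it suffices to show that no admissible $(d_1,\dots,d_\ell)$ gives a larger value. Because $\log$ is strictly increasing, maximizing $\sum_i\log d_i$ is equivalent to maximizing the product $\prod_{i=1}^{\ell} d_i$ subject to $\sum_i d_i=n$ and $d_i\ge 1$, and I would work with the product throughout.

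The feasible set is finite, so a maximizer $(d_1,\dots,d_\ell)$ exists. The heart of the argument is an exchange step: suppose, for contradiction, that at a maximizer two parts satisfy $d_i\ge d_j+2$. Replacing the pair $(d_i,d_j)$ by $(d_i-1,d_j+1)$ keeps the sum $n$ fixed and preserves feasibility (both new parts are still at least $1$), yet
\[
(d_i-1)(d_j+1)=d_id_j+(d_i-d_j-1)\ge d_id_j+1>d_id_j,
\]
so the product strictly increases, contradicting maximality. Hence at any maximizer all parts differ pairwise by at most $1$.

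It then remains to identify the value attained. A multiset of $\ell$ positive integers summing to $n$ whose elements differ pairwise by at most $1$ is unique: it consists of $r$ parts equal to $\lceil n/\ell\rceil=m+1$ and $\ell-r$ parts equal to $\lfloor n/\ell\rfloor=m$, where $r=n-m\ell$ and $0\le r<\ell$ (when $r=0$ all parts equal $m$ and the first term simply vanishes). Therefore every maximizer attains exactly $r\log(m+1)+(\ell-r)\log m$, which gives $I=\sum_{i=1}^{\ell}\log d_i\le r\log(m+1)+(\ell-r)\log m$ for every admissible configuration, as claimed. I would close by noting that $\ell\le n/2$ forces $m=\lfloor n/\ell\rfloor\ge 2$, so all parts and all logarithms are well defined and positive.

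I expect the main subtlety to be not the exchange step itself but the temptation to prove the bound through the continuous relaxation: by concavity of $\log$, Jensen's inequality yields only $\sum_i\log d_i\le \ell\log(n/\ell)$, which is strictly weaker than the claimed bound whenever $\ell$ does not divide $n$. The essential point is therefore that integrality must be exploited, and the discrete exchange argument is precisely what sharpens the loose concavity estimate into the attainable value $r\log(m+1)+(\ell-r)\log m$.
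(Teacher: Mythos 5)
Your proof is correct and takes essentially the same route as the paper's: an exchange argument showing that moving a time unit from a larger set to a smaller one strictly increases the objective, so the optimum is the balanced partition into parts of size $m$ and $m+1$. The only differences are cosmetic --- you run the exchange on the product $\prod_i d_i$ at a maximizer rather than on $\sum_i \log d_i$ via strict concavity of $\log$, which lets you skip the paper's separate Jensen case for $r=0$ and its extra bookkeeping for the leftover unbalanced sets.
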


On the other hand, when the number of photons observed in a frame is $\ell>n/2$, Alice partitions the set of time units in the frame into $n-\ell$ subframes. Let $F_1,F_2,\dots,F_{n-\ell}$ denote these subframes, and let $i_1,i_2,\dots,i_{n-\ell}$ be the indices of the empty time units. First, the time unit $i_j$ is assigned to the subframe $F_j$ for $j\in \{1,2,\cdots,n-\ell\}$ by Alice. Then, each subframe chooses an unassigned time unit at random starting from the first subframe. This step will be repeated until all time units have been assigned. In the end, there are $\Bar{r}$ subframes of size $\Bar{m}+1$ and $n-\ell-\Bar{r}$ subframes of size $\Bar{m}$, where ${n=\Bar{m}(n-\ell)+\Bar{r}}$ and $0\leq \Bar{r} <n-\ell$. There is exactly one empty time unit in each subframe, and its position is uniformly distributed. Thus, each frame occupied with ${\ell > n/2}$ photons contributes $\Bar{\rho}={\Bar{r}\log (\Bar{m}+1)+(n-\ell-\Bar{r})\log \Bar{m}}$ bits of information. Using the following lemma, we show that this is the maximum information that can be extracted from a frame of size $n$ containing $\ell > n/2$ photons using the adaptive framing scheme.
\begin{lemma}\label{lem:MaxRate2}
Let $n$ be the size of a frame consisting of ${ \ell > n/2}$ photons. Alice constructs $n-\ell$ sets and assigns one each of the empty time units to the respective sets. The remaining time units are assigned at random to the sets. Let $d_i\geq 1$ denote the number of elements in set $i$. The total information that can be extracted from the frame is therefore $I= \sum_{i=1}^{n-\ell}\log d_i$. It holds that \[I= \sum_{i=1}^{n-\ell}\log d_i \leq \Bar{r}\log (\Bar{m}+1)+(n-\ell-\Bar{r})\log \Bar{m},\]
where $n=\Bar{m}(n-\ell)+\Bar{r}$ and $0\leq \Bar{r} <n-\ell$.
\end{lemma}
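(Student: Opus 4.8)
The plan is to recognize this as the integer optimization problem of maximizing $\sum_{i=1}^{N}\log d_i$ subject to $\sum_{i=1}^{N} d_i = n$ and $d_i \ge 1$, where $N = n-\ell$ is the number of sets. This is precisely the problem already solved in Lemma~\ref{lem:MaxRate1}, the only difference being that the number of sets has changed from $\ell$ to $n-\ell$; since $\ell > n/2$ forces $n-\ell < n/2$, the hypothesis of Lemma~\ref{lem:MaxRate1} is satisfied with $n-\ell$ playing the role of $\ell$, and the bound follows at once upon writing $n = \bar{m}(n-\ell) + \bar{r}$. I would nonetheless also record the self-contained argument, since it makes transparent why the most balanced allocation is optimal.

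Because $\log$ is increasing, maximizing $\sum_{i} \log d_i$ is equivalent to maximizing the product $\prod_{i=1}^{N} d_i$ over integer tuples with $d_i \ge 1$ and fixed sum $n$. The key step is an exchange (smoothing) argument. Suppose an optimal tuple contained two sets with $d_i \ge d_j + 2$. Replacing $(d_i,d_j)$ by $(d_i - 1, d_j + 1)$ preserves both the total $n$ and the constraint $d_i \ge 1$, while the product changes by the factor $(d_i-1)(d_j+1) = d_i d_j + (d_i - d_j - 1) > d_i d_j$, a strict increase since $d_i - d_j - 1 \ge 1$. Hence no optimal tuple can have two entries differing by more than one, so an optimal tuple takes only the two values $\bar{m} = \lfloor n/(n-\ell)\rfloor$ and $\bar{m}+1$.

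It then remains to count the multiplicities. If $a$ entries equal $\bar{m}+1$ and the remaining $(n-\ell)-a$ equal $\bar{m}$, the sum constraint gives $a(\bar{m}+1) + ((n-\ell)-a)\bar{m} = (n-\ell)\bar{m} + a = n$, so $a = n - (n-\ell)\bar{m} = \bar{r}$ by the definition $n = \bar{m}(n-\ell) + \bar{r}$ with $0 \le \bar{r} < n-\ell$. Substituting back yields $I = \sum_{i} \log d_i \le \bar{r}\log(\bar{m}+1) + (n-\ell-\bar{r})\log \bar{m}$, as claimed. The only real subtlety is justifying that the balanced two-valued allocation is genuinely optimal, i.e.\ that the exchange step terminates and is strictly improving at every application; once that is in place, consistency with $d_i \ge 1$ is automatic (here $\bar{m}\ge 1$) and the remainder is bookkeeping identical in form to Lemma~\ref{lem:MaxRate1}.
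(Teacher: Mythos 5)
Your proposal is correct and matches the paper's approach: the paper's own proof of this lemma simply defers to Lemma~\ref{lem:MaxRate1} ("similar, omitted for brevity"), which is exactly your first observation that the problem is the Lemma~\ref{lem:MaxRate1} optimization with $n-\ell$ sets in place of $\ell$. Your self-contained exchange argument (phrased multiplicatively on the product rather than via concavity of $\log$) is a clean restatement of the same smoothing idea used in the paper's proof of Lemma~\ref{lem:MaxRate1}, so nothing further is needed.
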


The subframes information will be sent to Bob over the public channel by Alice. In the example of Fig.~\ref{fig:PhotonLanding}, the first frame contributes $3$ bits of information. The second and the fourth frames contribute $4$ bits of information each. For instance, consider the second frame. Let index the time units in the second frame using the numbers 1 to 8. The time units $3$ and $4$ are occupied with photons. Alice forms two subframes denoted by $F_1$ and $F_2$, and assigns the time units $3$ and $4$ to these two subframes, respectively. Then, from the remaining time units, Alice assigns $3$ time units to each subframe randomly as it was explained before, and sends the subframes information to Bob over the public channel. Thus, Alice and Bob have information about two subframes containing four time units while only one time units carries a photon in each subframe. These two subframes contribute $2$ bits of information each. The third frame is discarded, and the fifth frame contributes $4$ bits of information since it can be partitioned into $4$ subframes of size $2$ where there is one occupied time unit in each subframe. 
 \begin{theorem}\label{theo:AF}
Let $n$ and $\ell$ denote the number of time units and the number of photons in a frame, respectively. The raw key rate of the adaptive framing scheme is given by 
\begin{equation}\label{eq:afRate}
    r_{AF}=\frac{1}{n}\Biggr [\sum_{\ell=1}^{n/2}\binom{n}{\ell}p^{\ell}(1-p)^{n-\ell}\rho%\Big(r\log (m+1)+(\ell-r)\log m\Big)
    +\sum_{\ell=\frac{n}{2}+1}^{n-1}\binom{n}{\ell}p^{\ell}(1-p)^{n-\ell}\Bar{\rho}%\Big(\Bar{r}\log (\Bar{m}+1)+(n-\ell-\Bar{r})\log \Bar{m}\Big)
    \Biggr ].
\end{equation}
 \end{theorem}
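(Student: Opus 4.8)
The plan is to compute the expected number of raw key bits produced per frame and then normalize by the frame length $n$, since the raw key rate is by definition the expected number of raw key bits per time unit.

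First I would record the distribution of the number of photons in a frame. Because each of the $n$ time units independently carries a photon with probability $p$, the photon count $\ell$ in a frame is a Binomial random variable, so $\Pr[\ell\text{ photons in a frame}]=\binom{n}{\ell}p^{\ell}(1-p)^{n-\ell}$. Summing the per-frame yield against these weights, and dividing by $n$, will give the stated rate, so the substance of the argument is to pin down the per-frame contribution conditioned on $\ell$.

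Next I would treat the two regimes. For $1\le \ell\le n/2$, the frame is split into $\ell$ subframes, and the round-robin assignment of the remaining time units forces the subframe sizes to differ by at most one: exactly $r$ subframes have size $m+1$ and $\ell-r$ have size $m$, where $n=m\ell+r$ with $0\le r<\ell$. Each subframe contains exactly one photon whose position is uniform over that subframe, so a subframe of size $d$ yields $\log d$ bits, and summing gives $\rho=r\log(m+1)+(\ell-r)\log m$; Lemma~\ref{lem:MaxRate1} certifies that $\rho$ is the maximum extractable here, so the scheme attains it. The regime $\ell>n/2$ is symmetric with empty time units in place of photons: the frame is split into $n-\ell$ subframes, each holding exactly one empty unit with uniform position, giving $\bar\rho=\bar r\log(\bar m+1)+(n-\ell-\bar r)\log\bar m$, whose optimality is certified by Lemma~\ref{lem:MaxRate2}. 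The boundary cases $\ell=0$ and $\ell=n$ carry no positional information and contribute nothing, which is why the summation ranges omit them. Averaging $\rho$ over $1\le\ell\le n/2$ and $\bar\rho$ over $n/2<\ell\le n-1$ against the Binomial weights, and dividing by $n$, reproduces~\eqref{eq:afRate}.

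I expect the only delicate point — and hence the step to justify carefully — to be the claim that a subframe of size $d$ genuinely contributes $\log d$ raw bits, which hinges on the single occupied (resp. empty) time unit being uniformly distributed over the $d$ positions of its subframe; this follows from the uniform random assignment of the unconstrained time units in the construction. Once this is established, the achievability of $\rho$ and $\bar\rho$ is immediate from Lemmas~\ref{lem:MaxRate1} and~\ref{lem:MaxRate2}, and the remainder is a routine expectation computation with no real obstacle.
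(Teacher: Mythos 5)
Your proposal is correct and follows essentially the same route as the paper: condition on the binomial photon count $\ell$, use the per-frame contributions $\rho$ and $\Bar{\rho}$ established in the scheme description (with Lemmas~\ref{lem:MaxRate1} and~\ref{lem:MaxRate2} certifying optimality), and normalize by $n$. The extra care you take in justifying that a subframe of size $d$ yields $\log d$ bits via the uniformity of the occupied (resp.\ empty) position is a detail the paper relegates to the scheme's description rather than the proof, but it is the same argument.
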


\vspace{0.1cm}
\section{Effect of Public Channel Communication}\label{sec:PC}
In this section, we investigate the effect of public channel communication on the raw key rate. 
For the simple binning and the adaptive binning, communication over the public channel is not required. However, in the adaptive aggregated binning and adaptive framing, after each time frame, Alice needs to form bins or subframes and send the information to Bob over the public channel. Thus, for these two schemes, we partition time into a number of windows, which we further split into two phases: sensing phase and communication phase. In the sensing phase, which consists of $n$ time units, Alice and Bob observe photon arrival times ,and in the communication phase, they talk over the public channel. Let $D$ and $T$ denote the communication time over the public channel and the length of a time unit, respectively. Hence, the length of a window is $nT+D$ and the number of raw secret bits that a scheme generates in a window is given by $n\times $(raw key rate of the scheme). We define the 
{\it effective raw key rate of a scheme} as the expected number of raw key bits per time unit considering the effect of public channel communication. The raw key rates and the effective raw key rates of the simple binning and the adaptive binning schemes are the same. The effective raw key rate of the adaptive aggregated binning and adaptive framing schemes are given as follows
\vspace{0.1cm}
\begin{equation*}
    \Tilde{r}_{AAB}=\frac{nT}{nT+D}r_{ABB},
\end{equation*}    
%\qquad

\begin{equation*}
    \Tilde{r}_{AF}=\frac{nT}{nT+D}r_{AF}.
\end{equation*}
 %\vspace{0.2cm}
 Note that the typical length of a time unit is about tens of picoseconds ($10^{-12}$ seconds)\cite{lee2016highrate,Zhong2015PhotonefficientQK}.
 
\begin{figure*}[t]
    \centering
    \includegraphics[width=1\textwidth]{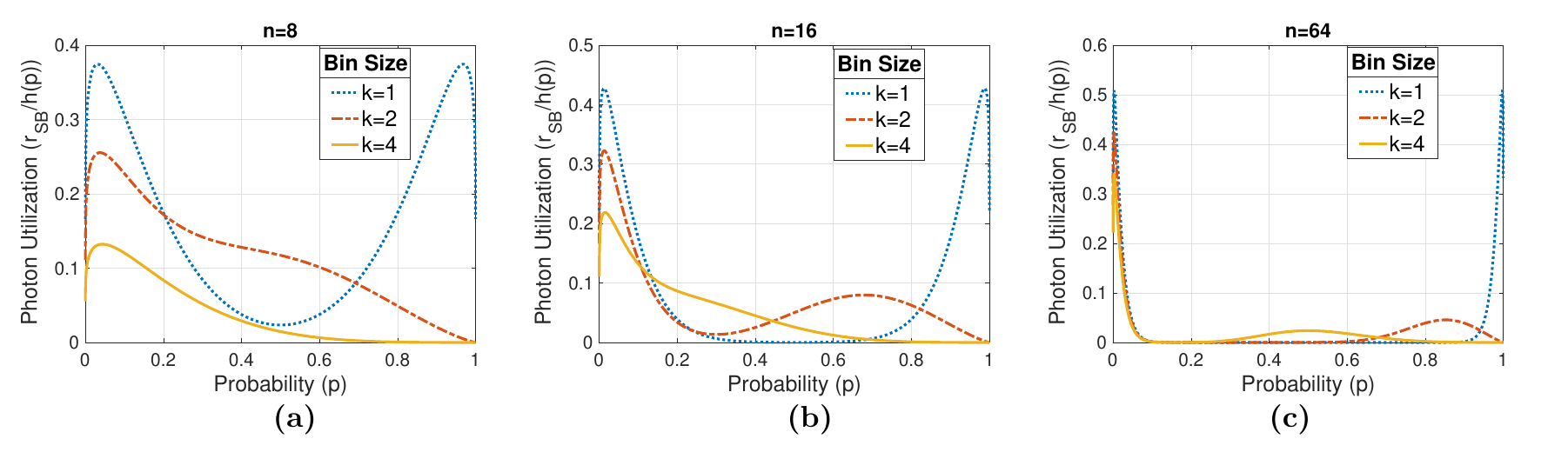}
    
    \vspace{-0.5cm}
    \caption{\small{Average photon utilization of the simple binning scheme vs.\ the time unit occupancy probabilities (cf.~\eqref{eq:MaxRate}~and~\eqref{eq:SimpleRate}), for: a) frames of $n=8$ time units and three different bin sizes $k\in\{1,2,4\}$, b) frames of $n=16$ time units and three different bin sizes $k\in\{1,2,4\}$, and c) frames of $n=64$ time units and three different bin sizes $k\in\{1,2,4\}$.}}
    \label{fig:SimpleRate}
\end{figure*}

\begin{figure*}[t]
    \centering
    \includegraphics[width=1\textwidth]{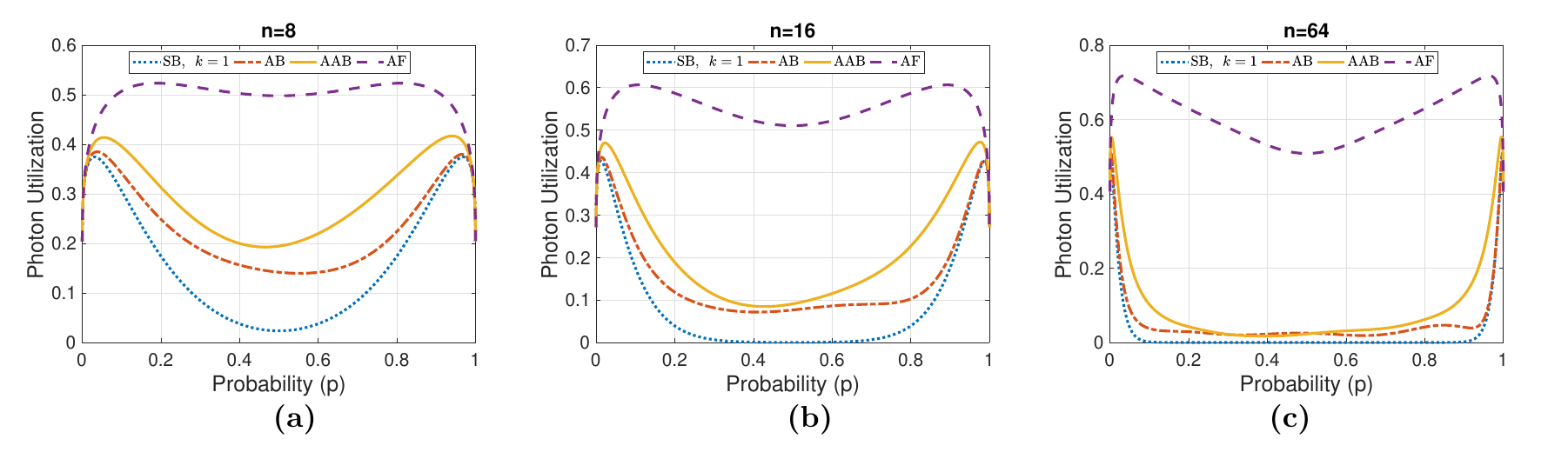}
    
    \vspace{-0.4cm}
\caption{\small{Average photon utilization of the simple binning (SB) for bin size $k=1$ (cf.~\eqref{eq:SimpleRate}), the adaptive binning (AB) (cf.~\eqref{eq:abrate}), the adaptive aggregated binning (AAB) (cf.~\eqref{eq:aabRate}), and the adaptive framing (AF) (cf.~\eqref{eq:afRate}) schemes vs.\ the time unit occupancy probability, for: a) frames of $n=8$ time units, b) frames of $n=16$ time units, and c) frames of $n=64$ time units.}}
\label{fig:Comp}
\end{figure*} 
 
\section{Comparison Results}\label{sec:CR}
In this section, we evaluate and compare the performance of the proposed schemes. Fig.~\ref{fig:SimpleRate} illustrates the performance of the simple binning scheme. It can be observed that for all three different frame sizes, the maximum photon utilization is achieved when the bin size is set to $1$. It can also be seen that increasing the frame size improves the highest achievable photon utilization for all three different bin sizes. Note that, for some range of the time unit occupancy probability, bin sizes $k=2$ and $k=4$ result in a higher photon utilization in comparison to bin size $k=1$. 

The photon utilization of the simple binning (SB) for bin size $k=1$, the adaptive binning (AB), the adaptive aggregated binning (AAB), and the adaptive framing (AF) schemes as a function of the time unit occupancy probability is depicted in Fig.~\ref{fig:Comp}. Observe that the AF outperforms the other three schemes for all range of the time unit occupancy probability. For all four schemes, the highest photon utilization is obtained when the time unit occupancy probability is either close to $0$ or close to $1$. Moreover, the performances of all the schemes are identical when time unit occupancy probability is very small or very large, since almost all the occupied frames carry $1$ photon or $n-1$ photons, respectively. Note that, although the AF and the AAB have a superior performance in comparison to the SB and the AB, they require public channel communication.

\bibliographystyle{IEEEtran}
\bibliography{Long-version}

% Generated by IEEEtran.bst, version: 1.14 (2015/08/26)
\begin{thebibliography}{10}
\providecommand{\url}[1]{#1}
\csname url@samestyle\endcsname
\providecommand{\newblock}{\relax}
\providecommand{\bibinfo}[2]{#2}
\providecommand{\BIBentrySTDinterwordspacing}{\spaceskip=0pt\relax}
\providecommand{\BIBentryALTinterwordstretchfactor}{4}
\providecommand{\BIBentryALTinterwordspacing}{\spaceskip=\fontdimen2\font plus
\BIBentryALTinterwordstretchfactor\fontdimen3\font minus
  \fontdimen4\font\relax}
\providecommand{\BIBforeignlanguage}[2]{{%
\expandafter\ifx\csname l@#1\endcsname\relax
\typeout{** WARNING: IEEEtran.bst: No hyphenation pattern has been}%
\typeout{** loaded for the language `#1'. Using the pattern for}%
\typeout{** the default language instead.}%
\else
\language=\csname l@#1\endcsname
\fi
#2}}
\providecommand{\BIBdecl}{\relax}
\BIBdecl

\bibitem{Diamanti_2016}
\BIBentryALTinterwordspacing
E.~Diamanti, H.-K. Lo, B.~Qi, and Z.~Yuan, ``Practical challenges in quantum
  key distribution,'' \emph{npj Quantum Information}, vol.~2, no.~1, Nov 2016.
  [Online]. Available: \url{http://dx.doi.org/10.1038/npjqi.2016.25}
\BIBentrySTDinterwordspacing

\bibitem{article}
N.~Islam, C.~Lim, C.~Cahall, J.~Kim, and D.~Gauthier, ``Provably-secure and
  high-rate quantum key distribution with time-bin qudits,'' \emph{Science
  Advances}, vol.~3, 09 2017.

\bibitem{Wehnereaam9288}
\BIBentryALTinterwordspacing
S.~Wehner, D.~Elkouss, and R.~Hanson, ``Quantum internet: A vision for the road
  ahead,'' \emph{Science}, vol. 362, no. 6412, 2018. [Online]. Available:
  \url{https://science.sciencemag.org/content/362/6412/eaam9288}
\BIBentrySTDinterwordspacing

\bibitem{PhysRL.Khan.2007}
I.~Ali-Khan, C.~J. Broadbent, and J.~C. Howell, ``Large-alphabet quantum key
  distribution using energy-time entangled bipartite states,'' \emph{Phys. Rev.
  Lett.}, vol.~98, p. 060503, Feb 2007.

\bibitem{lee2016highrate}
C.~Lee, D.~Bunandar, Z.~Zhang, G.~R. Steinbrecher, P.~B. Dixon, F.~N.~C. Wong,
  J.~H. Shapiro, S.~A. Hamilton, and D.~Englund, ``High-rate field
  demonstration of large-alphabet quantum key distribution,'' 2016.

\bibitem{Sarihan:19}
\BIBentryALTinterwordspacing
M.~C. Sarihan, K.-C. Chang, X.~Cheng, Y.~S. Lee, T.~Zhong, H.~Zhou, Z.~Zhang,
  F.~N. Wong, J.~H. Shapiro, and C.~W. Wong, ``High dimensional quantum key
  distribution with biphoton frequency combs through energy-time
  entanglement,'' in \emph{Conference on Lasers and Electro-Optics}.\hskip 1em
  plus 0.5em minus 0.4em\relax Optical Society of America, 2019, p. FTh1A.3.
  [Online]. Available:
  \url{http://www.osapublishing.org/abstract.cfm?URI=CLEO_QELS-2019-FTh1A.3}
\BIBentrySTDinterwordspacing

\bibitem{PhysRevLett.2002}
\BIBentryALTinterwordspacing
N.~J. Cerf, M.~Bourennane, A.~Karlsson, and N.~Gisin, ``Security of quantum key
  distribution using $\mathit{d}$-level systems,'' \emph{Phys. Rev. Lett.},
  vol.~88, p. 127902, Mar 2002. [Online]. Available:
  \url{https://link.aps.org/doi/10.1103/PhysRevLett.88.127902}
\BIBentrySTDinterwordspacing

\bibitem{PhysRevA.2005}
\BIBentryALTinterwordspacing
G.~M. Nikolopoulos and G.~Alber, ``Security bound of two-basis
  quantum-key-distribution protocols using qudits,'' \emph{Phys. Rev. A},
  vol.~72, p. 032320, Sep 2005. [Online]. Available:
  \url{https://link.aps.org/doi/10.1103/PhysRevA.72.032320}
\BIBentrySTDinterwordspacing

\bibitem{PhysRevA.2006}
\BIBentryALTinterwordspacing
G.~M. Nikolopoulos, K.~S. Ranade, and G.~Alber, ``Error tolerance of two-basis
  quantum-key-distribution protocols using qudits and two-way classical
  communication,'' \emph{Phys. Rev. A}, vol.~73, p. 032325, Mar 2006. [Online].
  Available: \url{https://link.aps.org/doi/10.1103/PhysRevA.73.032325}
\BIBentrySTDinterwordspacing

\bibitem{PhysRevA.2010}
\BIBentryALTinterwordspacing
L.~Sheridan and V.~Scarani, ``Security proof for quantum key distribution using
  qudit systems,'' \emph{Phys. Rev. A}, vol.~82, p. 030301, Sep 2010. [Online].
  Available: \url{https://link.aps.org/doi/10.1103/PhysRevA.82.030301}
\BIBentrySTDinterwordspacing

\bibitem{Brougham_2013}
T.~Brougham, S.~M. Barnett, K.~T. McCusker, P.~G. Kwiat, and D.~J. Gauthier,
  ``Security of high-dimensional quantum key distribution protocols using
  franson interferometers,'' \emph{Journal of Physics B: Atomic, Molecular and
  Optical Physics}, vol.~46, no.~10, p. 104010, may 2013.

\bibitem{Zhong2015PhotonefficientQK}
\BIBentryALTinterwordspacing
T.~Zhong, H.~Zhou, R.~D. Horansky, C.~Lee, V.~B. Verma, A.~E. Lita,
  A.~Restelli, J.~C. Bienfang, R.~P. Mirin, T.~Gerrits, S.~W. Nam, F.~Marsili,
  M.~D. Shaw, Z.~Zhang, L.~Wang, D.~Englund, G.~W. Wornell, J.~H. Shapiro, and
  F.~N.~C. Wong, ``Photon-efficient quantum key distribution using
  time{\textendash}energy entanglement with high-dimensional encoding,''
  \emph{New Journal of Physics}, vol.~17, no.~2, p. 022002, feb 2015. [Online].
  Available: \url{https://doi.org/10.1088/1367-2630/17/2/022002}
\BIBentrySTDinterwordspacing

\bibitem{Wornell.2013.Layered}
H.~{Zhou}, L.~{Wang}, and G.~{Wornell}, ``Layered schemes for large-alphabet
  secret key distribution,'' in \emph{2013 Information Theory and Applications
  Workshop (ITA)}, Feb 2013, pp. 1--10.

\bibitem{10.5555/1972505}
M.~A. Nielsen and I.~L. Chuang, \emph{Quantum Computation and Quantum
  Information: 10th Anniversary Edition}, 10th~ed.\hskip 1em plus 0.5em minus
  0.4em\relax USA: Cambridge University Press, 2011.

\bibitem{yang2020efficient}
S.~Yang, M.~C. Sarihan, K.-C. Chang, C.~W. Wong, and L.~Dolecek, ``Efficient
  information reconciliation for energy-time entanglement quantum key
  distribution,'' 2020.

\bibitem{mao2019high}
H.~Mao, Q.~Li, Q.~Han, and H.~Guo, ``High throughput and low cost ldpc
  reconciliation for quantum key distribution,'' 2019.

\bibitem{6181805}
Y.~{Kochman} and G.~W. {Wornell}, ``On high-efficiency optical communication
  and key distribution,'' in \emph{2012 Information Theory and Applications
  Workshop}, Feb 2012, pp. 172--179.

\bibitem{Wornell.2013.AdaPPM}
H.~{Zhou} and G.~{Wornell}, ``Adaptive pulse-position modulation for
  high-dimensional quantum key distribution,'' in \emph{2013 IEEE International
  Symposium on Information Theory}, July 2013, pp. 359--363.

\end{thebibliography}

\appendix[Proof of Lemmas and Theorems]

\begin{proof}[Proof of Theorem~\ref{thrm:OptRate}]
 The photon inter-arrival times are geometrically distributed. Thus, the maximum information that can be extracted from an observed photon is equal to the entropy of a geometric random variable with parameter $p$, which is given by $[-p\log p - (1-p)\log (1-p)]/p$. In the period of $n$ time units, the average number of observed photons is equal to $np$, and thus the average number of bits that can be extracted in the period of $n$ time units is $n[-p\log p - (1-p)\log (1-p)]$. Hence, the number of bits per time unit that can be obtained on average is given by ${h(p)=-p\log p - (1-p)\log (1-p)}$.
 \end{proof}
 
 \begin{proof}[Proof of Theorem~\ref{theo:STB}]
Let $A$ denote the event that there is only one occupied bin in a frame. The probability of event $A$ is given by $P(A)=\binom{n/k}{1}\pi_k\Bar{\pi}_k^{{n}/{k}-1}$. Also, let the event that only one empty bin exists in a frame be denoted by $B$. The probability of event $B$ is given by ${P(B)=\binom{n/k}{1}\pi_k^{{n}/{k}-1}\Bar{\pi}_k}$. The raw key rate of the simple binning scheme is given by ${r_{SB}=\frac{1}{n}P(A\cup B)\log (n/k)}$. For the case $k=n$, we have $\log (n/k)=0$ and consequently $r_{SB}=0$. The case $k={n}/{2}$ indicates that there are two bins of size $n/2$ in a frame. One can readily confirm that the events $A$ and $B$ are equivalent for this case. Thus, we have ${P(A)=P(B)=P(A\cap B)}$. Note that $\log (n/k)=1$ and $P(A\cup B)=P(A)=\frac{n}{k}\pi_k\Bar{\pi}_k$ when $k={n}/{2}$. Therefore, $r_{SB}=\frac{1}{k}\pi_k\Bar{\pi}_k$ for the case $k=n/2$. For the cases where $k\leq n/4$, we have $P(A\cap B)=0$ and consequently $P(A\cup B)=P(A)+P(B)=\frac{n}{k}\Big [ \pi_k\Bar{\pi}_k^{\frac{n}{k}-1}+\pi_k^{\frac{n}{k}-1}\Bar{\pi}_k\Big ]$. Thus, $r_{SB}=\frac{1}{k}\Big [ \pi_k\Bar{\pi}_k^{\frac{n}{k}-1}+\pi_k^{\frac{n}{k}-1}\Bar{\pi}_k\Big ]\log \frac{n}{k}$ when $k\leq n/4$.
\end{proof}
%\newpage
\begin{proof}[Proof of Theorem~\ref{theo:AB}]
Given that $\ell$ photons have been observed in a frame, the probability that bins of size $k$ satisfy the binning conditions such that there is only one occupied bin in the frame is given by $p_k(\ell)=\binom{n/k}{1}\binom{k}{\ell}p^{\ell}(1-p)^{n-\ell}$. When $k < \ell$, it is assumed that $p_k(\ell)=0$. Note that $k$ is not necessarily the minimum bin size that satisfy the binning conditions, and thus $p_k(\ell)$ includes all the cases that $k/2^i$, $i\in\{0,1,\cdots,\log k\}$, is the minimum bin size that satisfies the binning conditions. Hence, the probability that $k$ is the minimum bin size that satisfies the binning conditions such that there is only one occupied bin in the frame is given by ${p_k(\ell)-p_{k/2}(\ell)}$. The number of bits obtained by the cases wherein there is only one occupied bin in the frame is given by $\sum_{\ell=1}^{n/2}\sum_{i={\lceil \log \ell \rceil}}^{\log (n/2)}\Big( p_{2^i}(\ell)-p_{2^{i-1}}(\ell) \Big) (\log n -i)$. 
We can simplify $\sum_{i={\lceil \log \ell \rceil}}^{\log (n/2)}\Big( p_{2^i}(\ell)-p_{2^{i-1}}(\ell) \Big) (\log n -i)$ by expanding it as follows. Let $x\triangleq {\lceil \log \ell \rceil}$ and $y\triangleq \log n$. Note that $p_{2^{x-1}}(\ell)=0$ since $2^{x-1}<\ell$.

\begin{multline*}
 \sum_{i=x}^{y-1}\Big( p_{2^i}(\ell)-p_{2^{i-1}}(\ell) \Big) (y -i)= {\color{red}\cancel{p_{2^x}(\ell)(y-x)}}\\
 {\color{blue}\cancel{+p_{2^{x+1}}(\ell)(y-x-1)}}{\color{red}\cancel{-p_{2^x}(\ell)(y-x)}}+p_{2^x}(\ell)\\
 {\color{orange}\cancel{+p_{2^{x+2}}(\ell)(y-x-2)}}{\color{blue}\cancel{-p_{2^{x+1}}(\ell)(y-x-1)}}+p_{2^{x+1}}(\ell)\\
 \vdots\\
 +p_{2^{y-1}}(\ell)(y-y+1){\color{magenta}\cancel{-p_{2^{y-2}}(\ell)(y-y+2)}}+p_{2^{y-2}}(\ell)\\
 =p_{2^x}(\ell)+p_{2^{x+1}}(\ell)+\dots+p_{2^{y-2}}(\ell)+ p_{2^{y-1}}(\ell)=\sum_{i=x}^{y-1}p_{2^i}(\ell)\\
\end{multline*}

\vspace{-0.4cm}
The probability that $k$ is the minimum bin size that satisfies the binning conditions such that there is only one empty bin in the frame is given by $\binom{n/k}{1}\pi_k^{\frac{n}{k}-1}\Bar{\pi}_k$, where ${k< n/2}$. Note that $k=n/2$ has already been addressed as it is the same for the case that there is only one occupied bin in the frame.
The number of bits obtained by the cases wherein there is only one empty bin in the frame is given by $\sum_{i=0}^{\log(n/4)}\binom{n/2^i}{1}\pi_{2^i}^{\frac{n}{2^i}-1}\Bar{\pi}_{2^i}\log(\frac{n}{2^i})$. Thus, the raw key rate of the adaptive binning scheme is given by
\begin{equation*}
r_{AB}=\frac{1}{n}\Biggr [\sum_{\ell=1}^{n/2}\sum_{i={\lceil \log \ell \rceil}}^{\log (n/2)} p_{2^i}(\ell)+
\sum_{i=0}^{\log(n/4)}\binom{n/2^i}{1}\pi_{2^i}^{\frac{n}{2^i}-1}\Bar{\pi}_{2^i}\log(\frac{n}{2^i})\Biggr].
\end{equation*}
\end{proof}

\begin{proof}[Proof of Theorem~\ref{theo:AAB}]
In the adaptive aggregated binning scheme, when a frame contains ${\ell\leq n/2}$ photons, the time units in the frame are partitioned into $m={n}/{2^{\lceil \log \ell \rceil}}$ bins of size $2^{\lceil \log \ell \rceil}$ such that only one of the bins is occupied. Thus, each frame containing $\ell\leq n/2$ photons contributes ${\log m=\log n-{\lceil \log \ell \rceil}}$ bits of information. The probability that $\ell$ photons are observed in a frame is given by $\binom{n}{\ell}p^{\ell}(1-p)^{n-\ell}$. Using a similar argument, one can show that each frame consisting of $\ell>n/2$ photons contributes ${\log n-{\lfloor \log (n-\ell) \rfloor}}$ bits of information. Thus, it is easy to see that~\eqref{eq:aabRate} gives the raw key rate of the adaptive aggregated binning scheme.
%The time units in a frame containing ${\ell\leq n/2}$ photons are partitioned into $m={n}/{2^{\lceil \log \ell \rceil}}$ bins of size $2^{\lceil \log \ell \rceil}$ such that only one of the bins is occupied. Thus, each frame containing $\ell\leq n/2$ photons contributes ${\log m=\log n-{\lceil \log \ell \rceil}}$ bits of information. Using a similar argument, one can show that each frame consisting of $\ell>n/2$ photons contributes ${\log n-{\lfloor \log (n-\ell) \rfloor}}$ bits of information. Thus, it is easy to see that the raw key rate of the adaptive aggregated binning scheme is given by~\eqref{eq:aabRate}.
\end{proof}

\begin{proof}[Proof of Lemma~\ref{lem:MaxRate1}]
If $r=0$, this inequality is an immediate consequence of Jensen’s inequality and the concavity of the log function. Hence, suppose $r>0$. There must be at least one $i$ such that $d_i\leq m$ as otherwise ${n<\ell (m+1)\leq \sum_{i=1}^{\ell}\log d_i}$ which contradicts
that the $d_i$'s sum to $n$. Similarly, there is an $i$ such that $d_i\geq m+1$. We will now show that if there is an $i$ such that $d_i<m$ or $d_i> m+1$, then $I$ can be strictly increased. First, suppose that there is an $i_j$ such that $d_{i_j}<m$, and an $i_h$ such that $d_{i_h}>m+1$. We may suppose these correspond to the largest and the smallest sets. Then, take an empty time unit from a set of size $d_{i_h}$ and place it in one of the sets of size $d_{i_j}$. Since the log function is strictly increasing and strictly concave, we gain ${(\log(d_{i_j}+1)-\log d_{i_j})-(\log d_{i_h}-\log (d_{i_h}-1))> 0}$. Clearly, such exchanges can continue until either all sets have at least $m$ members or no set has more than $m+1$ members. If all sets are of size $m$ or $m+1$, then the argument is complete. Now, suppose that there is a set with more than $m+1$ time units with the remaining sets having $m$. Then, the number of sets of size $m$ must be equal to $\ell-r+f$ with $f>0$ as the total number of time units is equal to $n$. Now, take an empty time unit from the set with largest size and place it in a set of size $m$, which gives an increase in information as before. Repeat this until $f$ becomes $0$ so that $I$ becomes $r\log (m+1)+(\ell-r)\log m$. A similar argument applies if there is a set $d_i<m$, and the remaining sets all have $m+1$ time units. 
\end{proof}

\begin{proof}[Proof of Lemma~\ref{lem:MaxRate2}]
The proof is similar to the proof of Lemma~\ref{lem:MaxRate1}, and thus omitted for the purpose of brevity.
\end{proof}

\begin{proof}[Proof of Theorem~\ref{theo:AF}]
 It has been already shown that, in the adaptive framing scheme, each frame occupied with ${\ell\leq n/2}$ photons contributes ${\rho =r\log (m+1)+(\ell-r)\log m}$ bits of information, where ${n=m\ell+r}$ and $0\leq r <\ell$. Also, it has been shown that each frame occupied with ${\ell > n/2}$ photons contributes $\Bar{\rho}={\Bar{r}\log (\Bar{m}+1)+(n-\ell-\Bar{r})\log \Bar{m}}$ bits of information, where ${n=\Bar{m}(n-\ell)+\Bar{r}}$ and ${0\leq \Bar{r} <n-\ell}$. The probability that $\ell$ photons are observed in a frame is given by $\binom{n}{\ell}p^{\ell}(1-p)^{n-\ell}$. Thus, it is easy to see that~\eqref{eq:afRate} gives the raw key rate of the adaptive framing scheme.
 \end{proof}
\end{document}